\theoremstyle{theorem}
\newtheorem{proposition}{Proposition}
\newtheorem*{lemma*}{Lemma}
\theoremstyle{definition}
\newtheorem*{definition*}{Definition}
\newtheorem*{remark*}{Remark}
\newtheorem*{example*}{Example}
\begin{document}

\title{A Mechanism Design Approach to Allocating Travel Funds}

\author{Michael A. Jones} 

\maketitle



\begin{abstract} 
I explain how faculty members could exploit a method to allocate travel funds and how to use game theory to design a method that cannot be manipulated.    \end{abstract}
\bigskip

In mathematics and other disciplines, faculty members are required to give talks at conferences on their research or teaching.  While I was a professor at Montclair State University in New Jersey, the financial requests for travel exceeded the amount the Dean's office for the College of Science and Mathematics had budgeted, which meant that only a portion of the travel costs was covered.  Whether on purpose or not, faculty could present inaccurate information about their travel needs and expenses and have a higher portion of their costs paid.  But, because the amount to be allocated was fixed, this meant that other faculty would receive less money.  The potential for exploiting the system led the Associate Dean to ask me to construct a new method to allocate travel funds that could not be exploited. Because it was possible for faculty to game the system, it seemed reasonable to use game theory to construct this new method.
 
In this paper, I explain the old method to award travel funds and how faculty could misrepresent their financial needs to have a higher portion of their travel paid.  For the new method, it is in each of the faculty member's best interest to reveal truthfully his or her financial needs. The method has the added benefit that it encourages faculty to be conservative in their spending so that each faculty member has a higher portion of their travel paid.  After an optimization perspective, I explain how awarding travel funds may be viewed as a game for which it is a Nash equilibrium for each faculty member to reveal truthfully how much money he or she needs to travel.  Constructing a game to achieve a particular outcome, in this case the truthful revelation of the financial requirements for faculty travel, is called mechanism design, a topic for which the Nobel Prize in Economics was awarded in 2007 \cite{Nobel3}.

\section*{The Original Method and Its Exploitation}

At the start of each academic year, the Dean's office asks each faculty member to submit a financial request indicating how much money he or she needs to attend one conference. This may seem difficult at first, because a faculty member may wish to attend a conference in April, but has to know how much he or she will spend before spending it!  Because registration fees and hotel accommodations for conferences are posted well in advance, faculty either estimate their airfare or book their airfare in advance, too.  (It is rare for faculty to submit a request for travel and then decide not to attend because the Dean's office doesn't offer enough support.  For this reason, faculty tend to book their flights in advance, too.)  At [Blinded University], faculty receive a per diem, a fixed amount of money for meals each day.  This means that how much faculty members need for travel can be determined before submitting their travel requests (as long as they remember to include taxes for the hotel room and any costs incurred to and from the airport).  Based on these requests, the Dean's office sets aside or budgets an amount for each faculty member to spend on travel.  A faculty member could ask for a cash advance for the amount set aside, but more often than not, faculty members would return from the conference and return receipts to recoup the budgeted amount.

To model the travel funds problem, let $r_i$ be the amount (in dollars) requested by faculty member $i$.  Let $R$ be the total amount requested by the $n$ faculty members so that $R = r_1 + \cdots + r_n$ and let $B$ be the amount the Dean's office has budgeted for travel.  If $B \ge R$, then faculty members are budgeted the amounts they requested.  The interesting question is how to budget money for each faculty member's travel when $B < R$.  Under the old method, the Dean's office budgets $b_i = (B/R)r_i$ for faculty member $i$'s travel, so that each faculty member receives the same  proportion $B/R$ of the amount requested. Consequently, all funds are budgeted because $B = b_1 + \cdots + b_n$.  Faculty member $i$ is responsible for any amount spent above $(B/R)r_i$ and only needs to supply receipts for the allocated amount $(B/R)r_i$ because the Dean's office will not reimburse more. The following example demonstrates the old method of allocating travel funds.

\begin{example*}  Suppose that the total requested by the faculty is $R = \text{100,000}$, but the Dean's office has budgeted only $B= \text{75,000}$ for travel.  Under the old method, then each faculty member is budgeted $3/4$ ($=\text{75,000}/\text{100,000}$) of the amount he or she requests.  If faculty member $i$ requests $r_i = \text{1200}$, then she is budgeted $b_i =  (3/4)\times \text{1200} = 900$. \end{example*}

Because faculty member $i$ has no control over the other faculty members' requests, define $R_{-i} = R - r_i$ to highlight the other faculty members' contributions to $R$.  Faculty member $i$ can view the amount he or she is budgeted as a real-valued function of $\mathbf r = (r_1, r_2, \ldots, r_n)$ in which
\begin{align*}
f_i (\mathbf r) &= (B/R) r_i = \left( \frac{B}{R_{-i} + r_i} \right) r_i  = \left( \frac{B}{R_{-i} + r_i} \right) \left( R_{-i} + r_i \right) - \frac{B R_{-i}}{R_{-i} + r_i} \\
&  = B - \frac{B R_{-i}}{R_{-i} + r_i} = B - BR_{-i} \left( R_{-i} + r_i \right)^{-1}.
\end{align*}
Notice that $f_i$ is increasing in $r_i$ because $\partial f_i(\mathbf r)/ \partial r_i = BR_{-i} \left( R_{-i} + r_i \right)^{-2} > 0$ for $r_i, R_{-i} > 0$.  This makes sense, as a faculty member is budgeted more if he or she requests more money.  It is this property that led faculty members at [Blinded University] to game the system.  Let's return to the earlier example to see how this works.  Indeed, if $R_{-i}$ and $B$ are known, then faculty member $i$ could request an amount to have his or her entire trip paid for!

\begin{example*}[continued] From before, $B = \text{75,000}$ and $R_{-i} = \text{98,800}$.  Suppose that faculty member $i$ really does need 1200 dollars to travel to a conference.  Then, faculty member $i$ can request $r_i^*$ to be budgeted $b_i = 1200$ by solving for $r_i^*$ in 
\begin{equation*} 
1200 = f_i((r_i^*, \mathbf r_{-i})) = \left( \frac{B}{R_{-i} + r_i^*} \right) r_i^* \text{ so that } r_i^* =  \frac{1200 R_{-i}}{B - 1200} \approx 1606.50,
\end{equation*}
where $(r_i^*, \mathbf r_{-i}) = (r_1, \ldots, r_{i-1},r^*_i, r_{i+1}, \ldots, r_n)$ replaces $r_i$ in $\mathbf r$ with $r^*_i$.  Faculty member $i$ can request 1606.50, be budgeted 1200 from the Dean's office, spend 1200, and be reimbursed for the entire 1200. \end{example*}

In practice, faculty members know neither how much the Dean's office has to spend on travel nor how much the other faculty will request, but I assume that a faculty member is interested in spending as little money out of pocket as possible.  Because $f_i$ is increasing in $r_i$, the faculty member can always inflate how much he or she needs to travel and be budgeted more by the Dean's office, and therefore spend less out of pocket.  Every faculty member has this incentive to misrepresent his or her needs.  But, if every faculty member requests twice as much as they need, then they are budgeted the same amount as if each faculty member asked for the honest amount needed. This could lead to a travel request race, like an arms race, in which faculty ask for more and more money for their one trip, even if they could never spend the amount!

In an effort to keep faculty from misrepresenting how much they needed for their travels, the Dean's office required each faculty member $i$ to submit details of how the request $r_i$ would be spent.  A faculty member $i$ would provide information about hotel costs, registration fees, airfare, etc.  Yet, faculty members were still able to misrepresent the amount needed for travel.  Here is how they could do it.  

When attending a conference, there are usually many hotels affiliated with the conference, and other hotels close to the conference that are not affiliated with the conference. This mix of hotels includes a range of prices.  A faculty member could provide information to indicate that he or she would stay in a single occupant room at a more expensive hotel to justify a higher $r_i$. However, once the Dean's office computes $b_i$, faculty member $i$ wants to minimize how much is spent out of pocket by spending as close to $b_i$ as possible.  This could be done, whether intentionally gaming the system or not, by deciding to share a room at a less expensive hotel.

At this point, hopefully the problem is clear.  The Dean's office budgets the amounts $b_i$ based on $\mathbf r$ and $B$.  Once $b_i$ is determined, faculty member $i$ needs only to submit receipts up to $b_i$ to be reimbursed---meaning that a shrewd faculty member could spend less than $b_i$ and be reimbursed the full amount of the trip!  Even faculty who were not gaming the system would submit receipts up to $b_i$ and, as a way to cut down on paperwork, just not bother to collect receipts for expenses over $b_i$; this made it difficult to determine who was exaggerating travel requests.  (I always submitted all receipts to encourage the university to set aside more money for faculty travel the next year.)  The new method bases the amount budgeted to faculty member $i$ on $\mathbf r$ and $B$ proportionally as before, but reimburses according to how much is spent.  This slight change is enough to induce faculty to be truthful about how much they need for their travels.

\section*{New Method to Award Travel Funds and Truthful Revelation}

Let $s_i$ be the amount spent by faculty member $i$.  Under the old method, if each faculty member $i$ spends $s_i = r_i$, then faculty member $i$ is reimbursed the same proportion $b_i / s_i = b_i / r_i = B/R$ of the amount spent. If $b_i < s_i < r_i$, then faculty member $i$ is paid a higher proportion of his or her travel costs because $b_i/s_i > b_i/r_i$.  If $s_i \le b_i$, then faculty member $i$ is reimbursed 100\% of the travel costs.  (Of course, if faculty member $i$ spends more than requested, the proportion of his or her travel costs would be less than $b_i/r_i$.)  Besides the potential for exploiting the method to award travel funds, the old method was not fair because the proportions $b_i / s_i$ were not equal. 

Under the new method, when $B < R$, faculty member $i$ is still budgeted $b_i = (B/R) r_i$, but now the amount reimbursed depends on $s_i$ and $\mathbf r$ and is given by 
\begin{equation} \label{newreimbursementfunction}
g_i(\mathbf r, s_i)  = \left\{ \begin{array}{ll}
(B/R) s_i & \text{if $s_i \le r_i$} \\
(B/R) r_i & \text{if $s_i > r_i$.}
\end{array} \right.
\end{equation}
The method gives every faculty member the same proportion of the amount spent, as long as a faculty member spends up to the amount requested.  The Dean's office's hands are tied if a faculty member spends more than requested, because awarding the same proportion in this case could result in the Dean's office not having enough money to pay what was budgeted for another faculty member's travel.  Because faculty are able to determine how much they will spend for travel, the new method remedies the concern that faculty will be reimbursed different proportions.  Perhaps surprisingly, the simple change in how the travel money is reimbursed, according to both the amount spent and the amount requested, also addresses the Associate Dean's concern that faculty could exploit the method.  As before, we assume that faculty member $i$ tries to minimize her out-of-pocket expense, which is given by $s_i - g_i (\mathbf r, s_i)$, where $r_i$ is a variable and $s_i$ and $r_j$, for $j \ne i$, are fixed.

\begin{proposition} \label{prop1} To minimize out-of-pocket expenses, it is in the best interest of faculty member $i$ to request $r_i = s_i$.
\end{proposition}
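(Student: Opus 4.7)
The plan is to analyze the out-of-pocket expense $h_i(r_i) := s_i - g_i(\mathbf r, s_i)$ as a piecewise function of the decision variable $r_i$, holding $s_i$ and $r_j$ (for $j \ne i$) fixed, and show it is minimized exactly at $r_i = s_i$.

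First I would write out the two pieces explicitly using the definition in (\ref{newreimbursementfunction}). When $r_i \ge s_i$, we have $h_i(r_i) = s_i - (B/R)s_i = s_i\bigl(1 - B/(R_{-i}+r_i)\bigr)$. When $r_i < s_i$, we have $h_i(r_i) = s_i - (B/R)r_i = s_i - f_i(\mathbf r)$, where $f_i$ is the function analyzed earlier in the paper. A quick check at $r_i = s_i$ shows both expressions agree, so $h_i$ is continuous.

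Next I would show monotonicity on each piece. On the upper piece ($r_i \ge s_i$), the factor $B/(R_{-i}+r_i)$ is strictly decreasing in $r_i$, so $h_i$ is strictly increasing; the minimum on this piece is therefore achieved at the left endpoint $r_i = s_i$. On the lower piece ($r_i < s_i$), I would invoke the fact already established earlier that $f_i$ is strictly increasing in $r_i$ (with $\partial f_i/\partial r_i = B R_{-i}(R_{-i}+r_i)^{-2} > 0$), which makes $h_i = s_i - f_i$ strictly decreasing in $r_i$; hence $h_i$ is minimized on this piece by pushing $r_i$ up toward $s_i$.

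Combining the two pieces, $h_i$ is strictly decreasing on $[0, s_i]$ and strictly increasing on $[s_i, \infty)$, so the unique global minimum over admissible requests is at $r_i = s_i$, which is the claim. I do not anticipate a serious obstacle: the argument is essentially a one-variable piecewise monotonicity check, and the only subtlety is making sure to use $R = R_{-i} + r_i$ (rather than a fixed $R$) when differentiating on the upper piece, so that the full dependence of the reimbursement proportion on the faculty member's own request is taken into account.
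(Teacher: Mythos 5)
Your proposal is correct and follows essentially the same route as the paper: both split into the cases $r_i \ge s_i$ and $r_i < s_i$ and rely on the same two facts, that $B/(R_{-i}+r_i)$ is decreasing in $r_i$ and that $f_i(\mathbf r) = \bigl(B/(R_{-i}+r_i)\bigr) r_i$ is increasing in $r_i$. Your packaging as a piecewise-monotonicity argument for $h_i$ (decreasing on $[0,s_i]$, increasing on $[s_i,\infty)$, continuous at the kink) is a slightly cleaner way to organize the paper's pointwise comparisons against the sincere request, but the substance is identical.
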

\begin{proof}
Recall that $B < R$.  If faculty member $i$ is sincere in her request, by requesting $r_i = s_i$, then her out-of-pocket expense is $s_i - g_i(\mathbf r, s_i) = r_i - (B/R)r_i = (1 - B/R)r_i$.  The goal is to show that requesting $r_i \ne s_i$ results in a higher out-of-pocket expense.

If faculty member $i$ requests $r_i > s_i$, then her out-of-pocket expense is $s_i - (B/R)s_i$, which is greater than her out-of-pocket expense $s_i - (B / (s_i + R_{-i}))s_i$ if she had requested $s_i$; this follows because $B/(s_i + R_{-i}) > B/(r_i + R_{-i}) = B/R$.

If faculty member $i$  requests $r_i < s_i$, then her out-of-pocket expense is $s_i - (B/R)r_i$ because she is capped at the amount that she can be reimbursed, receiving the proportion $B/R$ of the amount she requested.  She pays more out of pocket by requesting $r_i < s_i$ if
\begin{equation*}
s_i - \left( \frac{B}{R} \right) r_i = s_i - \left( \frac{B}{r_i + R_{-i}} \right) r_i > s_i - \left( \frac{B}{s_i + R_{-i}} \right) s_i,
\end{equation*}
which is equivalent to 
\begin{equation*}
\left( \frac{B}{r_i + R_{-i}} \right) < \left(  \frac{B}{s_i + R_{-i}} \right).
\end{equation*} This last inequality is equivalent to the easy-to-verify inequality
$Bs_ir_i + Bs_i R_{-i} > Bs_i r_i + B r_i R_{-i}$ after cross multiplication.

Faculty member $i$ always pays less out of pocket by requesting $s_i$, the amount that she will spend. \end{proof}

The slight perturbation in how the reimbursements were awarded had an added benefit.  It encourages faculty to spend less, because decreasing how much they  spend decreases their out-of-pocket expenses. Spending less could be achieved by sharing a room or staying at less expensive hotel.

\begin{proposition}  To minimize out-of-pocket expenses, it is in the best interest of faculty member $i$ to spend less. 
\end{proposition}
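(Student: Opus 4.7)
The plan is to use Proposition~\ref{prop1} to reduce the claim to a one-variable monotonicity check. By Proposition~\ref{prop1}, a rational faculty member requests $r_i = s_i$, so her out-of-pocket expense simplifies to
\[
s_i - g_i(\mathbf{r}, s_i) \;=\; s_i - \frac{B\, s_i}{s_i + R_{-i}} \;=\; \left(1 - \frac{B}{s_i + R_{-i}}\right) s_i.
\]
It then suffices to show that this expression is strictly increasing in $s_i$ on the relevant range, so that shrinking $s_i$ strictly shrinks the out-of-pocket expense.

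I would argue monotonicity by viewing the expression as a product of two positive, strictly increasing factors of $s_i$. The factor $s_i$ itself is positive and obviously increasing. The factor $1 - B/(s_i + R_{-i})$ is positive because $B < R = s_i + R_{-i}$, and it is strictly increasing in $s_i$ because the denominator grows with $s_i$ while $B$ is fixed. A product of two positive, strictly increasing quantities is strictly increasing, which yields the claim. As a quick cross-check, differentiating gives $1 - B R_{-i}/(s_i + R_{-i})^2$, and this is positive since $R > B$ and $R > R_{-i}$ together force $R^2 > B R_{-i}$.

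There is essentially no technical obstacle; the calculation is short. The point worth drawing out in the write-up is why the incentive to economize is amplified rather than merely proportional: decreasing $s_i$ both directly lowers the amount faculty member $i$ pays and, because the truthful request $r_i$ drops in tandem, shrinks $R$ and thereby raises the reimbursement proportion $B/R$. This is precisely the ``conservative-spending'' benefit flagged in the paragraph preceding the proposition.
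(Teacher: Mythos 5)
Your proposal is correct and follows essentially the same route as the paper: both reduce to the truthful request $r_i = s_i$ via Proposition~\ref{prop1} and then observe that the out-of-pocket expense $s_i\left(1 - B/(s_i + R_{-i})\right)$ is a product of two positive factors, each increasing in $s_i$. The only cosmetic difference is that the paper compares two discrete spending levels $s_i^- < s_i^+$ while you phrase the same fact as monotonicity (with a derivative cross-check).
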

\begin{proof}
Suppose that faculty member $i$ could spend $s_i^-$ or $s_i^+$ to attend the same conference, where $s_i^- < s_i^+$.  By Proposition \ref{prop1}, faculty member $i$ will request either $s_i^-$ or $s_i^+$. The amount spent out of pocket if faculty member $i$ requests and spends $s_i^-$ is $s_i^- ( 1- B/(s_i^- + R_{-i}))$.  Similarly, if faculty member $i$ requests and spends $s_i^+$, then her out-of-pocket expense is $s_i^+ (1 - B /(s_i^+ + R_{-i}))$.  When the faculty member requests and spends $s_i^-$, she is paid a higher proportion of her expenses than when she requests and spends $s_i^+$.  Because she is also spending less, then she is also paying less out of pocket. \end{proof}

\section*{Travel Fund Allocation Game and Mechanism Design}  

Up to this point, there has been no mention of a game because a faculty member's optimal behavior is independent of how the other faculty members behave.  Yet, this independence may be viewed from a game theory perspective.  To view the travel allocation problem as a game, first we consider each faculty member $i$, for $i = 1$ to $n$, as a player of the game.  Player $i$'s strategy is to submit a monetary request $r_i$, under the assumption that player $i$ can determine $s_i$.  As before, let $\mathbf r = (r_1, \ldots, r_n)$; $\mathbf r$ is called a strategy profile because it contains each player's strategy.  Define player $i$'s utility function to be $u_i (\mathbf r,s_i) = g_i(\mathbf r, s_i) - s_i$, where $g_i$ is the new reimbursement function from equation (\ref{newreimbursementfunction}).  Notice that $u_i(\mathbf r, s_i)$ is $(-1)$ times her out-of-pocket expenses, which is always less than or equal to zero.  In game theory, each player's goal is to maximize her utility function.  For this game, maximizing utility is the same as minimizing her out-of-pocket expenses.  

The most well-known solution concept in game theory is the Nash equilibrium.  As the term equilibrium suggests, a strategy profile $\mathbf r^*$ is a Nash equilibrium if it satisfies a notion of balance or stability in which no player has an incentive to change his or her strategy $r_i^*$, if every other player $j$ plays their strategy $r_j^*$.  The formal definition follows, where as before $(r_i, \mathbf r^*_{-i}) = (r^*_1, r_2^*, \ldots, r_{i-1}^*, r_i, r_{i+1}^*, \ldots, r_n^*)$.

\begin{definition*}
A strategy profile $\mathbf r^*$ is a Nash equilibrium for the travel allocation game if, for all $i$, $u_i(\mathbf r^*, s_i) \ge u_i ( (r_i, \mathbf r^*_{-i}), s_i)$ for every $r_i$. 
\end{definition*}

Proposition \ref{prop1} can be translated into an inequality involving player $i$'s utility function. Specifically, because it is in the best interest of player $i$ to request $s_i$ to minimize out-of-pocket expenses, then $u_i((s_i,\mathbf r_{-i}), s_i)  \ge u_i ( (r_i, \mathbf r_{-i}), s_i)$ for all $r_i$.  The proof of the proposition shows that the inequality is strict, so that $u_i((s_i,\mathbf r_{-i}), s_i) >  u_i ( (r_i, \mathbf r_{-i}), s_i)$ for $r_i \ne s_i$.  In game theory language, because of the strict inequality, it is a strictly dominant strategy for faculty member $i$ to request $r_i = s_i$.  Player $i$ has a dominant strategy if it maximizes player $i$'s utility independent of the other players' strategies. If every player has a dominant strategy and uses it, then the profile of dominant strategies is a Nash equilibrium.  The following proposition holds because requesting $r^*_i = s_i$ for each faculty member is a dominant strategy.

\begin{proposition} 
The strategy profile  $\mathbf r^* = (s_1, s_2, \dots, s_n)$, in which each faculty member $i$ requests $s_i$, is a Nash equilibrium for the travel allocation game.
\end{proposition}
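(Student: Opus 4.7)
The plan is to derive the Nash equilibrium condition directly from Proposition \ref{prop1}, which was already shown to hold for arbitrary values of the other players' requests. The key observation is that Proposition \ref{prop1} is actually a statement about a strictly dominant strategy: the best response $r_i = s_i$ does not depend on what the other faculty submit, only on faculty member $i$'s own expenditure $s_i$ and the (strictly positive) sum $R_{-i}$. Translating the conclusion into utilities via $u_i(\mathbf r, s_i) = g_i(\mathbf r, s_i) - s_i$, we have for every player $i$, every $\mathbf r_{-i}$, and every $r_i$,
\begin{equation*}
u_i((s_i, \mathbf r_{-i}), s_i) \ge u_i((r_i, \mathbf r_{-i}), s_i),
\end{equation*}
with strict inequality when $r_i \ne s_i$.

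To verify the Nash equilibrium definition for $\mathbf r^* = (s_1, s_2, \ldots, s_n)$, I would fix an arbitrary player $i$ and specialize the above inequality to the case $\mathbf r_{-i} = \mathbf r^*_{-i} = (s_1, \ldots, s_{i-1}, s_{i+1}, \ldots, s_n)$. This immediately yields
\begin{equation*}
u_i(\mathbf r^*, s_i) = u_i((s_i, \mathbf r^*_{-i}), s_i) \ge u_i((r_i, \mathbf r^*_{-i}), s_i)
\end{equation*}
for every $r_i$, which is exactly the inequality required by the definition of Nash equilibrium. Since $i$ was arbitrary, the condition holds for all players, so $\mathbf r^*$ is a Nash equilibrium.

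There is no real obstacle here, since the heavy lifting was done in the proof of Proposition \ref{prop1}; the only thing to be careful about is the logical direction. Proposition \ref{prop1} establishes a dominant-strategy statement (the best response is $s_i$ against \emph{every} $\mathbf r_{-i}$), which is strictly stronger than the Nash condition (the best response is $s_i$ against the specific profile $\mathbf r^*_{-i}$). Thus the argument is simply an instantiation of a quantifier, and the remark in the paragraph preceding the proposition, that a profile of dominant strategies is automatically a Nash equilibrium, can be invoked almost verbatim.
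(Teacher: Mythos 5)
Your proposal is correct and follows essentially the same route as the paper: it converts Proposition \ref{prop1} into the dominant-strategy inequality for the utility functions and then instantiates it at $\mathbf r_{-i} = \mathbf r^*_{-i}$ to obtain the Nash condition, which is exactly the argument the paper gives in the paragraph preceding the proposition. No gaps; your explicit attention to the quantifier instantiation is a slightly more careful write-up of the same idea.
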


Every player having a dominant strategy is a strong condition that assures a Nash equilibrium exists.  Interestingly, if we view the original allocation method and reimbursement function for the basis of a game, there is no Nash equilibrium.  As before, assume that there is not enough money budgeted to pay in full for every faculty members travel.  That means that at least one faculty member is paying positive out-of-pocket expenses.  Assuming all other faculty members' requests are fixed, then this faculty member can adjust his request, say $r_i$, to pay zero out-of-pocket expenses.  This means there can be no stability:  at least one faculty member can always do better by changing his request, which leads to the following proposition.

\begin{proposition}
There is no Nash equilibrium for the travel allocation game induced by the original allocation procedure.
\end{proposition}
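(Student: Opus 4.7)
The plan is to argue by contradiction. Suppose $\mathbf r^*$ were a Nash equilibrium of the game induced by the original method, in which player $i$'s reimbursement is $\min(s_i, b_i^*)$ with $b_i^* = (B/R^*) r_i^*$, so that her utility is $\min(s_i, b_i^*) - s_i \le 0$.

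The first step is to locate a player with strictly negative utility at $\mathbf r^*$. Since $\sum_i b_i^* = B$ while the standing hypothesis that the budget is insufficient gives $\sum_i s_i > B$, the inequality $b_i^* \ge s_i$ cannot hold for every $i$. Pick any $i$ with $b_i^* < s_i$; this player is reimbursed exactly $b_i^*$ and pays $s_i - b_i^* > 0$ out of pocket.

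The second step is to exhibit a profitable unilateral deviation for that player. Holding $\mathbf r^*_{-i}$ fixed, $b_i(r_i) = B r_i / (r_i + R^*_{-i})$ is continuous and strictly increasing in $r_i$, exactly the property already recorded for $f_i$ in the introduction. By continuity, any $r_i' > r_i^*$ sufficiently close to $r_i^*$ still yields $b_i(r_i') < s_i$, while monotonicity yields $b_i(r_i') > b_i^*$. Hence player $i$'s reimbursement rises to $b_i(r_i')$ and her out-of-pocket expense falls to $s_i - b_i(r_i') < s_i - b_i^*$, contradicting the Nash inequality at $\mathbf r^*$.

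The one point requiring care is not the deviation itself but the case split: one must produce \emph{some} player whose utility is strictly negative, and the only leverage for this is the bookkeeping identity $\sum_i b_i^* = B < \sum_i s_i$. Note that this approach is slightly cleaner than trying to drive the chosen player's out-of-pocket all the way to zero, which would additionally require $s_i < B$; a small perturbation $r_i' > r_i^*$ already yields the strict improvement needed to rule out equilibrium.
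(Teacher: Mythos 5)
Your proof is correct and follows essentially the same route as the paper: use the fact that the budget cannot cover everyone's spending to locate a player with $b_i^* < s_i$, then let that player profitably deviate by increasing her request, exploiting the monotonicity of the budgeted amount in $r_i$. The only difference is that the paper's deviating player solves for the request that drives her out-of-pocket expense all the way to zero (which tacitly requires $s_i < B$), whereas your small upward perturbation delivers the strict improvement without that restriction --- a minor but genuine tightening of the same argument.
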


A slight change how travel awards were paid out---by tying the amount reimbursed to the amount spent---literally was a game changer.  The process of constructing a game so that certain behavior is a Nash equilbirum, in this case the truthful revelation of what faculty members needed to spend, is called mechanism design.

 In 2007, economists Leonid Hurwicz, Eric Maskin, and Roger Myerson were awarded the Nobel Prize in Economics for their work in mechanism design \cite{Nobel1}.  As an example of another mechanism, a second-price or Vickrey auction awards an auctioned item to the bidder who bids the highest, but the bidder only pays the second-highest price. It is a weakly dominant strategy for each player to bid his or her true valuation of the item being auctioned.  Consequently, it is a Nash equilibrium for each player to bid his or her true value for the item.  However, there is more than one equilibrium for a second-price auction.  This is in contrast to the (new method) travel allocation game which only has one Nash equilibrium because each player has a strictly dominant strategy.  Examples of equilibria in which players do not bid their true valuation of the item being auctioned appear in \cite[p.645]{FAPP}.  William Vickrey was awarded Nobel Prize in Economics in 1996, along with James Mirrlees for their research into the economic theory of incentives under asymmetric information \cite{Nobel2}.  To round out the Nobel list, Nash (along John Harsanyi and Reinhard Selten)  was awarded the Nobel Prize in Economics in 1994 \cite{Nobel1}.

\section*{Afterward}

The mechanism design method I proposed was not the first one I suggested to the Associate Dean.  My initial suggestion was to allocate the same amount, $B/n$, to each of the $n$ faculty members. This would be fair because every faculty member would be allocated the same amount for travel.  I reasoned that faculty members would stretch their money, possibly paying to travel to more than one conference.  The Dean's office disagreed and was concerned that faculty members would spend all of the money, even if they needed less.  Fairness for the mechanism design approach is achieved by having faculty be reimbursed the same proportion instead.  The university adopted the new method.

\vfill\eject


\begin{thebibliography}{1}

\bibitem{FAPP} Consortium for Mathematics and Its Applications. 2016. {\it For all practical purposes: mathematical literacy in today's world}. New York: W.H. Freeman. 

\bibitem{Nobel1} Nobel Media AB (2020).  The Sveriges Riksbank Prize in Economic Sciences in Memory of Alfred Nobel 1994. NobelPrize.org. Sat. 26 Sep 2020. https://www.nobelprize.org/prizes/economic-sciences/1994/summary/

\bibitem{Nobel2} Nobel Media AB (2020). The Sveriges Riksbank Prize in Economic Sciences in Memory of Alfred Nobel 1996. NobelPrize.org.  Sat. 26 Sep 2020. https://www.nobelprize.org/prizes/economic-sciences/1996/summary/

\bibitem{Nobel3} Nobel Media AB (2020). The Sveriges Riksbank Prize in Economic Sciences in Memory of Alfred Nobel 2007. NobelPrize.org. Sat. 26 Sep 2020. https://www.nobelprize.org/prizes/economic-sciences/2007/summary/

\end{thebibliography}
\end{document}